\documentclass[twocolumn,conference]{IEEEtran}


\usepackage{latexsym}
\usepackage{bm}
\usepackage[dvips]{graphics}
\usepackage{graphicx,subfigure}
\usepackage{psfrag}
\usepackage{amsfonts,amsmath,amssymb}
\usepackage{algorithm,algorithmic}
\usepackage{dsfont,color,subfigure}
 \usepackage{pslatex}


\usepackage{xspace}
\usepackage{bbm}

%
%
%
%
%
%
%
%
%
%
%
%
\def\In{\mathop{\rm In}\nolimits}%
\def\Out{\mathop{\rm Out}\nolimits}%

\newcommand{\Ac}{\mathcal{A}}

\newcommand{\Ec}{\mathcal{E}}
\newcommand{\Fc}{\mathcal{F}}

\newcommand{\Ic}{\mathcal{I}}

\newcommand{\Nc}{\mathcal{N}}

\newcommand{\Rc}{\mathcal{R}}
\newcommand{\Sc}{\mathcal{S}}
\newcommand{\Tc}{\mathcal{T}}

\newcommand{\Vc}{\mathcal{V}}
\newcommand{\Wc}{\mathcal{W}}
\newcommand{\Xc}{\mathcal{X}}
\newcommand{\Yc}{\mathcal{Y}}

\def\a{\alpha}
\def\b{\beta}

\def\d{\delta}
\def\e{\epsilon}


\let\P\relax
\DeclareMathOperator\P{P}





\newcommand\ie{i.e.,\xspace}
\def\textiid{i.i.d.\@\xspace}
\newcommand\iid{\ifmmode\text{ i.i.d. } \else \textiid \fi}




\newtheorem{theorem}{Theorem}
\newtheorem{lemma}{Lemma}

\begin{document}

\title{On Equivalence Between Network Topologies}
%

\author{\IEEEauthorblockN{Tracey Ho}
\IEEEauthorblockA{Department of Electrical Engineering\\
California Institute of Technology\\
tho@caltech.edu; } \and
\IEEEauthorblockN{Michelle Effros}
\IEEEauthorblockA{Departments of Electrical Engineering\\
California Institute of Technology\\
effros@caltech.edu;} \and
\IEEEauthorblockN{Shirin Jalali}
\IEEEauthorblockA{Center for Mathematics of Information\\
California Institute of Technology\\
shirin@caltech.edu;}}

\maketitle

\newcommand{\p}{\mathds{P}}
\newcommand{\mb}{\mathbf{m}}
\newcommand{\bb}{\mathbf{b}}

\begin{abstract}
One major open problem in network coding is  to characterize the capacity region of a general multi-source multi-demand network. There are some existing computational tools for bounding the capacity of general networks, but their computational complexity grows very quickly with the size of the network. This motivates us to propose a new hierarchical approach which finds upper and lower bounding networks of smaller size for a given network. This approach sequentially replaces components of the network with simpler structures, i.e., with fewer links or nodes, so that the resulting network is more amenable to computational analysis and its capacity provides an upper or lower bound on the capacity of the original network. The accuracy of the resulting bounds can be bounded as a function of the link capacities. Surprisingly, we are able to simplify some families of network structures without any loss in accuracy.
\end{abstract}

\section{Introduction}

Finding the network coding capacity of networks with general topologies and communication demands is a challenging open problem, even for
networks consisting of noiseless point-to-point links. Information theoretic inequalities can be used to bound network capacities, but it is in
general a complex task to find the best combination of inequalities to apply.  While various bounds (e.g.~\cite{KramerS:06,AdlerH:06,AdlerH:06,HarveyK:07}) can be obtained by clever choices of inequalities, we would like to have systematic general techniques  for bounding the capacity in arbitrary network problems. We hope to derive these bounds in a way that allows us to bound the accuracy of the obtained  bounds and to trade off tightness and computation. The LP outer bound~\cite{Yeung:97},
which gives the tightest outer bound implied by Shannon-type inequalities and has been implemented as the software programs Information Theoretic
Inequalities Prover (ITIP) \cite{SubramanianT:08} and XITIP \cite{xitip}, has complexity exponential in the number of links in the network and can thus only be used to compute capacity
bounds for relatively small problem instances.  Inner bounds can be obtained by restricting attention to scalar linear, vector linear or abelian codes e.g.~\cite{MedardK:03,Chan:07}, but the complexity of such approaches also grows quickly in the network size. This motivates us to seek systematic
tools for bounding the capacity of a network by the capacity of another network with fewer links and  characterizing the difference in capacity.

In this paper we introduce a novel approach for analyzing  capacity regions of acyclic networks consisting of capacitated noiseless links with general demands. Inspired by \cite{KoetterE:10}, we employ a  hierarchical network analysis that replaces pieces of the network by equivalent or bounding models with fewer links.  At each step of the process, one component of the network is replaced by a simpler structure with the same inputs and outputs. The model is an upper bounding model if all functions that can be implemented across the original network can also be implemented across the model. The model is a lower bounding model if all functions that can be implemented across that model can also be implemented across the given component. If the same model is both an upper bounding model and a lower bounding model for a given component, then the component and its model are equivalent.  Where possible, we try to find upper and lower bounds that have identical structures, since bounding the accuracy of the resulting capacity bounds is easier when the topologies of the upper and lower bounding networks match.

The organization of this paper is as follows. Section \ref{s:model} describes the system model. The problem of finding equivalent or bounding networks of smaller size and the properties of such networks is discussed in Section \ref{s:e-b-networks}. Sections \ref{s:simple}  and  \ref{s:Y} describe a variety of operations for finding such networks. Section \ref{s:effect}  treats accuracy bounds. The networks considered in this paper are assumed to be acyclic. The effect of cycles and delay is discussed in Section \ref{s:cycle}. Finally, Section \ref{s:conclude} concludes the paper.

\section{System model}\label{s:model}
We mainly use the model and notation introduced in \cite{book:Yeung}. The network is modeled by an acyclic directed graph $\Nc=(\Vc,\Ec)$, where $\Vc$ and $\Ec\subseteq\Vc\times\Vc$ denote the set of nodes and links, respectively. Each directed link $e=(v_1,v_2)\in\Ec$ represents a noiseless link of capacity $C_e$ between the nodes $v_1$ and $v_2$ in $\Nc$.  For each node $v$, let $\In(v)$ and $\Out(v)$ denote the set of incoming and outgoing links of node $v$ respectively.

We assume that the source nodes ($\Sc$) and sink nodes ($\Tc$) are distinct, i.e., $\Sc\bigcap\Tc=\emptyset$, and each source (sink) node has only outgoing (incoming) links. There is no loss of generality in this assumption since any network that violates these conditions can be modified to form a network that satisfies these conditions and has the same capacity. After these modifications, each node $v\in \Vc$ falls into one of the following categories: i) source nodes ($\Sc$), ii) sink nodes ($\Tc$) and iii) relay nodes ($\Ic$).  Relay nodes have both incoming and outgoing links, and they do not observe any external information. Their role is to facilitate data transfer from the source nodes to the sink nodes.

A subnetwork $\Nc_s=(\Vc_s,\Ec_s)$ of network $\Nc=(\Vc,\Ec)$ is  constructed based on a subset of relay nodes $\Ic_s\subset\Ic$ as follows. The set of sources and sinks  of the  subnetwork $\Nc_s$ are defined as $\Sc_s=\{v\in\Vc: v\notin \Ic_s, \exists v'\in\Ic_s, (v,v')\in\Ec\}$, and $\Tc_s=\{v\in\Vc: v\notin\Ic_s, \exists v'\in\Ic_s, (v',v)\in\Ec\}$. Then, $\Vc_S=\Sc_s\bigcup\Ic_s\bigcup\Tc_s$, and $\Ec_s=\{e\in\Ec: e=(v,v'), v,v'\in\Vc_s\}$.

A coding scheme of block length $n$ for this network is described as follows. Each source node $s\in\Sc$ observes some message $M_s\in\Xc_s=\{1,2,\ldots,2^{nR_s}\}$. Each sink node $t\in\Tc$ is interested in recovering some of the messages that are observed by source nodes. Let $\b(t)\subseteq\Sc$ denote the set of source nodes that the node $t$ is interested in recovering. The coding operations performed by each node can be categorized as follows
\begin{enumerate}
\item Encoding functions performed by the source nodes:  For each $s\in\Sc$, and $e\in\Out(s)$, the encoding function corresponding to link $e$ is described as
\begin{align}
g_e:\Xc_{s}\to\{1,2,\ldots,2^{nC_e}\}.
\end{align}
\item Relay functions performed at realy nodes: If $v\notin\Sc\bigcup\Tc$, then for each $e\in\Out(v)$, the relay function corresponding to the link $e$ is described as
\begin{align}
g_e:\prod\limits_{e'\in\In(v)}\{1,2,\ldots,2^{nC_{e'}}\} \to \{1,2,\ldots,2^{nC_e}\}.
\end{align}
\item Finally, for each $t\in\Tc$, and each $s\in\b(t)$, a decoding function is defined as
\begin{align}
g_t^{s}:\prod\limits_{e\in\In(t)}\{1,2,\ldots,2^{nC_e}\} \to\Xc_s.
\end{align}
\end{enumerate}
A rate vector $\mathbf{R}$ corresponding to the set $\{R_s\}_{s\in\Sc}$ is said to be achievable on network $\Nc$, if for any $\e>0$, there exists $n$ large enough and a coding scheme of block length $n$ such that for all $t\in\Tc$ and $s\in\b(t)$
\begin{align}
\P(\hat{M}_s^{(t)}\neq X_s )\leq \e,
\end{align}
where $\hat{M}_s^{(t)}$ denotes the reconstruction of message $M_s$ at node $t$. Let $\Rc(\Nc)$ denote the set of achievable rates on network $\Nc$.

Throughout the paper, vectors are denoted by bold upper-case letters, e.g.~$\mathbf{A}$, $\mathbf{B}$, etc. Sets are denotes by calligraphic upper-case letters, e.g. $\mathcal{A}$, $\mathcal{B}$, etc.  For a vector $\mathbf{A}=(a_1,a_2,\ldots,a_n)$ of length $n$ and a set $\mathbf{F}\subset\{1,2,\ldots,n\}$, $\mathbf{A}_{\mathcal{F}}$ denotes a vector of length $|\Fc|$ formed  by the elements of the vector $\mathbf{A}$ whose indices are in the set $\mathcal{F}$ in the order they show up in $\mathbf{A}$.
\section{Equivalent and bounding networks}\label{s:e-b-networks}

The problem we consider  is defined formally as follows.  For a given network $\Nc$, we wish to find a network $\Nc'$ with fewer links for which the set of achievable rates either bounds $\Rc(\Nc)$ from below ($\Rc(\Nc')\subseteq\Rc(\Nc)$), bounds $\Rc(\Nc)$ from above ($\Rc(\Nc)\subseteq\Rc(\Nc')$) or describes it perfectly ($\Rc(\Nc)=\Rc(\Nc')$). We take a hierarchical approach, sequentially applying  operations to simplify the given network. Following \cite{KoetterE:10}, each operation replaces a subnetwork of the network with a bounding model. Subnetwork $\Nc_2$ is an upper bounding model for a subnetwork $\Nc_1$ with the same number of input (source) and output (sink) nodes (written $\Nc_1\subseteq \Nc_2$) if all functions $\{f_t\}_{t\in\Tc}$ of sources that can be reconstructed at the sinks of $\Nc_1$ can also be reconstructed at the sinks of $\Nc_2$ can also be reconstructed at the sinks of $\Nc_2$.  Here, each function $f_t$, for $t\in\Tc$, is a function of the information sources that are available at source nodes. Subnetworks $\Nc_1$ and $\Nc_2$ are equivalent if $\Nc_1\subseteq\Nc_2$ and $\Nc_2\subseteq\Nc_1$.

When deriving upper and lower bounding networks, it is  desirable to find upper and lower bounding networks that have the same topologies. In this case, we can bound the difference between the capacity of a network $\Nc$ and capacities $\Rc(\Nc_l)$ and $\Rc(\Nc_u)$ of lower and upper bounding networks $\Nc_l$ and $\Nc_u$ using a bound from \cite{Effros:10}. Note that by having identical graphs, we also require that all links have non-zero capacity in both networks.

For comparing two networks $\Nc_l$ and $\Nc_u$ which have identical topologies, define the difference factor between $\Nc_l$ and $\Nc_u$ as
\begin{align}
\Delta(\Nc_l,\Nc_u)\triangleq\max\limits_{e\in\Ec} \frac{C_e(\Nc_u)}{C_e(\Nc_l)},
\end{align}
where $C_e(\Nc_l)$ and $C_e(\Nc_u)$ denote the capacities of the link $e$ is $\Nc_l$ and $\Nc_u$ respectively. Note that $\Delta(\Nc_l,\Nc_u)\geq1$. Let $\Rc_l$ ($\Rc_u$) denote the capacity region of $\Nc_l$ ($\Nc_u$). Then
\begin{align}
\Rc_l\subseteq\Rc\subseteq\Rc_u,
\end{align}
while
\begin{align}
\Rc_u\subseteq\Delta(\Nc_l,\Nc_u)\Rc_l.
\end{align}

\section{Basic simplification operations}\label{s:simple}

One of the simplest operations for deriving an upper-bounding  network for a given network is merging nodes. Coalescing two nodes is equivalent to adding two links of infinite capacity from each of them to the other one. This is precisely the approach employed in cut-set bounds. Because of these infinite-capacity links, combining nodes, unless done wisely, potentially can result in very loose bounds. However, we show that  in some cases, nodes can be combined without affecting the network capacity.  One simple example is when the sum of the capacities of the incoming links of a node $v$ is less that the capacity of each of its outgoing links. In this situation, the node can be combined with all nodes $w$ such that $(v,w)\in\Ec$.

Another possible operation for getting upper or lower bounding networks is reducing or increasing the link capacities. As a special case of such operations, one can reduce the capacity of a link to zero which is the same as deleting the link.  In some cases reducing/increasing link capacities is helpful in simplifying the network.

Another type of operation for simplifying networks  is based on network cut-sets. A cut $P$ between two sets of nodes $\Wc_1$ and $\Wc_2$ is a partition of the network nodes $\Vc$ into two sets $Vc_1$ and $\Vc_2$ such that $\Wc_1\subseteq \Vc_1,\Wc_2\subseteq \Vc_2$. The capacity of a cut is defined as the sum of capacities of the  forward links of the cut, i.e.~links $(v,w)$ such that $v\in\Vc_1,w\in\Vc_2$.  Links $(v,w)$ such that $v\in\Vc_2,w\in\Vc_1$ are called backward links of the cut. For example, if we find  a minimum cut separating a sink from its corresponding sources and all other sink nodes, and connect the forward links directly to the sink node while preserving all the backward links, this results in an upper-bounding network. If instead of keeping the backward links, we delete them, a lower-bounding network is obtained. In the case where there are no backward links, this procedure results in an equivalent network. We can of course repeat this procedure for every sink.

Another simplification operation involves removing a set $\Ac$ of links or components and possibly replacing it with additional capacity that might be spread over the remaining network. A simple lower bounding network can be obtained by removing a set  $\Ac$, while an
upper bounding network can be obtained from replacing a set $\Ac$ by adding sufficient capacity to the remaining network to be able to carry any
information that could have been carried by the set $\Ac$. For example, if the remaining network contains paths  from the inputs to the outputs
of the set $\Ac$, we can formulate a linear program, based on generalized flows, to find the minimum factor $k$  by which scaling up the
capacities of the remaining links uniformly gives an upper bounding network\footnote{One way to think of this is to associate a commodity with
each link or path segment in $\Ac$. Conversions between commodities take place according to the link capacity ratios in $\Ac$.  The linear
program minimizes $k$ subject to flow conservation of these commodities.}. Since the lower bounding network obtained by just removing the set
$\Ac$ differs from the upper bounding network by the scaling factor $k$, this gives a multiplicative bound of $k-1$ on the potential capacity
difference associated with the upper and lower bounding operations.

\section{Y-networks and generalizations}\label{s:Y}

Consider the network shown in Fig.~\ref{fig:3-node} consisting of four nodes and three directed links with capacities $(r_1,r_2,r_3)$. This topology is probably the simplest network in which nodes are sharing resources to send their information. We will refer to such a network as a Y-network $Y(r_1,r_2,r_3)$.

Now consider the network shown in Fig.~\ref{fig:5-node} which consists of two Y-networks with shared input and output nodes. The following lemma shows that in some special cases this network is equivalent to another Y-network. This simplification reduces the number of links by 3 and the number of nodes by 1.

\begin{lemma}\label{lemma:Ynet}
Consider the network shown in Fig. \ref{fig:5-node}, $\Nc_1$, when $\tilde{a}=\alpha a$, $\tilde{b}=\alpha b$ and $\tilde{c}=\alpha c$. This network is equivalent to $\Nc_2$, a Y-network  $Y((1+\alpha)a, (1+\alpha)b,(1+\alpha)c)$.
\end{lemma}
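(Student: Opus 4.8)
The plan is to establish the two inclusions $\Nc_1\subseteq\Nc_2$ and $\Nc_2\subseteq\Nc_1$ separately, in each case by converting a block-length-$n$ code for one of the networks into a block-length-$n$ code for the other that reconstructs the same family of functions of the sources with the same error probability. The only arithmetic fact used is that a link of capacity $(1+\alpha)r$ carries $2^{n(1+\alpha)r}=2^{nr}\cdot 2^{n\alpha r}$ values, so its symbol alphabet may be identified with the product of the alphabet of a capacity-$r$ link and that of a capacity-$\alpha r$ link; this identification is applied at each of the three capacities $a$, $b$, and $c$.

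For $\Nc_1\subseteq\Nc_2$ --- every function reconstructible across the two parallel Y's of Fig.~\ref{fig:5-node} is reconstructible across the single Y --- I would simulate $\Nc_1$ on $\Nc_2$. Each source leaf of $\Nc_2$ emits, on its single outgoing link, the \emph{pair} of symbols that the corresponding leaf of $\Nc_1$ placed on its two links (to $v_1$ and $v_2$); the single relay of $\Nc_2$, which then holds both components of each of its inputs, applies the relay map of $v_1$ to the first components and the relay map of $v_2$ to the second components and emits the resulting pair on its capacity-$(1+\alpha)c$ link; and the sink keeps the $\Nc_1$ decoder, which already expected exactly that pair of symbols on its incoming edge. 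The error event is unchanged, so this direction is routine.

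The substantive direction is $\Nc_2\subseteq\Nc_1$. Starting from a code for $\Nc_2$ I would \emph{split}: write the symbol that each source leaf sends on its capacity-$(1+\alpha)a$ (respectively $(1+\alpha)b$) link as a pair under the identification above, route the first component through $v_1$ and the second through $v_2$, and likewise split the symbol that the $\Nc_2$ relay would put on its capacity-$(1+\alpha)c$ link into the symbols $v_1$ and $v_2$ send to the sink; the sink recombines the two received symbols into the $\Nc_2$ relay output and runs the unchanged decoder. The obstacle, and the heart of the proof, is that in $\Nc_2$ one relay sees both input symbols in full, whereas in $\Nc_1$ relay $v_1$ sees only the first component of each input and $v_2$ only the second, so the relay map of $\Nc_2$ must be realized as a pair of maps --- one computable from the first components alone, one from the second components alone. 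The key step is thus to choose the coordinate splittings at the two source leaves (and, compatibly, on the output link) so that the relay map of $\Nc_2$ decomposes into precisely this product form; this is where I expect essentially all of the work to lie. Once such a compatible splitting is exhibited, combining the two inclusions yields the equivalence, so that $\Nc_1$ may be replaced by $\Nc_2$ inside any larger network without changing its capacity region, at a saving of three links and one node.
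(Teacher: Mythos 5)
Your first direction ($\Nc_1\subseteq\Nc_2$, by merging the two relays and simulating both on the single fat relay) matches the paper's one-line outer bound. The problem is the second direction, and you have correctly located where the difficulty sits but not resolved it: you propose to split each length-$n$ symbol of the $\Nc_2$ code into a $(1:\alpha)$ pair of components, route the components through $v_1$ and $v_2$ separately, and then you need the relay map $g:\{1,\ldots,2^{n(1+\alpha)a}\}\times\{1,\ldots,2^{n(1+\alpha)b}\}\to\{1,\ldots,2^{n(1+\alpha)c}\}$ to factor as a product of a map of the first components and a map of the second components. You defer this as ``the key step,'' but for a fixed block-length-$n$ code this step is simply not available: a generic relay map mixes all bits of both inputs, and there is no choice of coordinate splittings under which it decomposes into the required product form. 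Nor is it enough to say some \emph{other} code achieving the same rates might decompose --- that would itself require proof and is exactly the content of the lemma. So as written the argument has a genuine gap at its central point.

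The paper avoids the decomposition problem entirely by working across multiple uses of the network rather than within a single block. Take $\alpha$ rational and use $\Nc_1$ for $m$ channel uses with $m$ chosen so that $k_1=\alpha m/((1+\alpha)n)$ and $k_2=m/((1+\alpha)n)$ are integers. Over these $m$ uses, relay $v_1$ receives $ma$ bits from $x_1$ and $mb$ bits from $x_2$, which is exactly the input to $k_2$ \emph{complete, independent} sessions of the block-length-$n$ code for $\Nc_2$; it runs the unmodified relay map $k_2$ times and outputs $k_2 n(1+\alpha)c=mc$ bits, which fits its outgoing link. Relay $v_2$ likewise handles $k_1$ complete sessions. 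In total $k_1+k_2=m/n$ sessions are completed in $m$ channel uses --- the same throughput as $\Nc_2$ --- and no relay ever needs to operate on a fragment of a session, so no factorization of the relay map is required. If you want to repair your proposal, replace ``split each symbol of one session between the two relays'' with ``assign whole sessions to the two relays in the ratio $1:\alpha$''; the irrational case then follows by rational approximation as the paper notes.
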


\begin{proof}
Clearly, a Y-network $Y((1+\alpha)a, (1+\alpha)b,(1+\alpha)c)$ is an outer bound for the network of Fig.~\ref{fig:5-node}. Hence, we only need to show that it also serves as  an inner bound.

For the rest of the proof assume that $\alpha$ is a rational number (If it is not rational, it can be approximated by rational numbers with arbitrary precision).

Consider a code of block length $n$ that runs on network $\Nc_2$. The middle node maps the $n(1+\a)a$ bits received from $x_1$ and the $n(1+\a)b$ bits received from $x_2$ to $n(1+\a)c$ bits  sent to node $y$.

In order to run the same code on $\Nc_1$, consider using the network $m$ times, where $m$ is chosen such that $\a m/(1+\a)n$ and $m/(1+\a)n$ are both integers. Note that since $\a$ is rational by assumption, it is always possible to find such $m$. Let $k_1\triangleq\a m/(1+\a)n$ and $k_2 \triangleq m/(1+\a)n$.

During these $m$ channel uses the intersection node at the left hand side receives $ma$ bits from $x_1$ and $mb$ bits from $x_2$.  This is equal to the bits received by the intersection node in $\Nc_2$ during $ma/(1+\a)an=k_2$ coding sessions. Therefore, using the code used on $\Nc_2$, these bits can be mapped into $k_2n(1+\a)c=mc$ bits that will be sent to $y$. Similarly, the number of bits received by the intersection node on the right hand side during $m$ channel uses is equal to the bits that would have been received by the intercession node on $\Nc_2$ during $k_1$ coding sessions.

\end{proof}


\begin{figure}[t]
\begin{center}
\psfrag{a}[l]{$r_1$}
\psfrag{b}[c]{$r_2$}
\psfrag{c}[c]{$r_3$}
\psfrag{x}[l]{$x_1$}
\psfrag{y}[l]{$x_2$}
\psfrag{z}[l]{$y$}
\includegraphics[width=2cm]{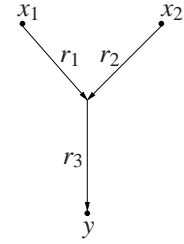}\caption{A Y-network}\label{fig:3-node}
\end{center}
\end{figure}

Lemma \ref{lemma:Ynet} serves as a useful tool in our network simplifications. For an example of how to employ this result, consider the network shown in Fig.~\ref{fig:4-node}. This network can be considered as a combination of two overlapping Y-networks.

\begin{figure}[h]
\begin{center}
\psfrag{a1}[l]{$a$}
\psfrag{a2}[l]{$\tilde{b}$}
\psfrag{b1}[c]{$b$}
\psfrag{b2}[l]{$\tilde{a}$}
\psfrag{c1}[l]{$c$}
\psfrag{c2}[l]{$\tilde{c}$}
\psfrag{x}[r]{$x_1$}
\psfrag{y}[r]{$x_2$}
\psfrag{z}[l]{$y$}
\includegraphics[width=3cm]{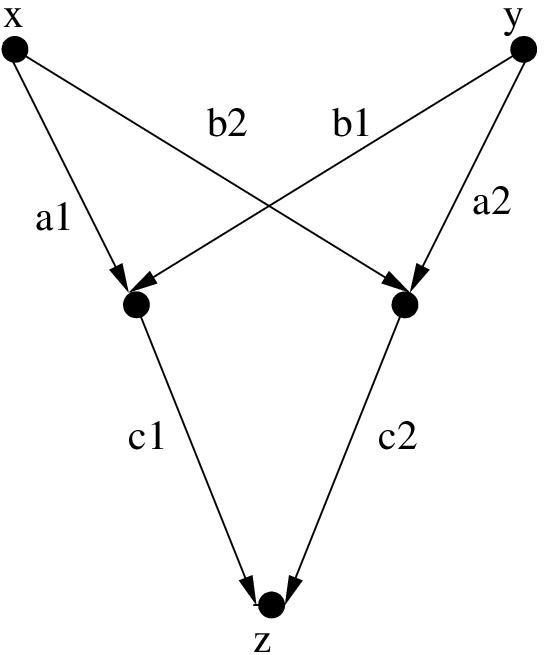}\caption{2 Y-networks with shared source and sink nodes and separate relay links}\label{fig:5-node}
\end{center}
\end{figure}

\begin{figure}[h]
\begin{center}
\psfrag{a}[l]{$a$}
\psfrag{b}[l]{$b$}
\psfrag{b1}[c]{$b'$}
\psfrag{c}[l]{$c$}
\psfrag{d}[l]{$d$}
\psfrag{x}[r]{$x_1$}
\psfrag{y}[r]{$x_2$}
\psfrag{z}[c]{$y$}
\includegraphics[width=2.3cm]{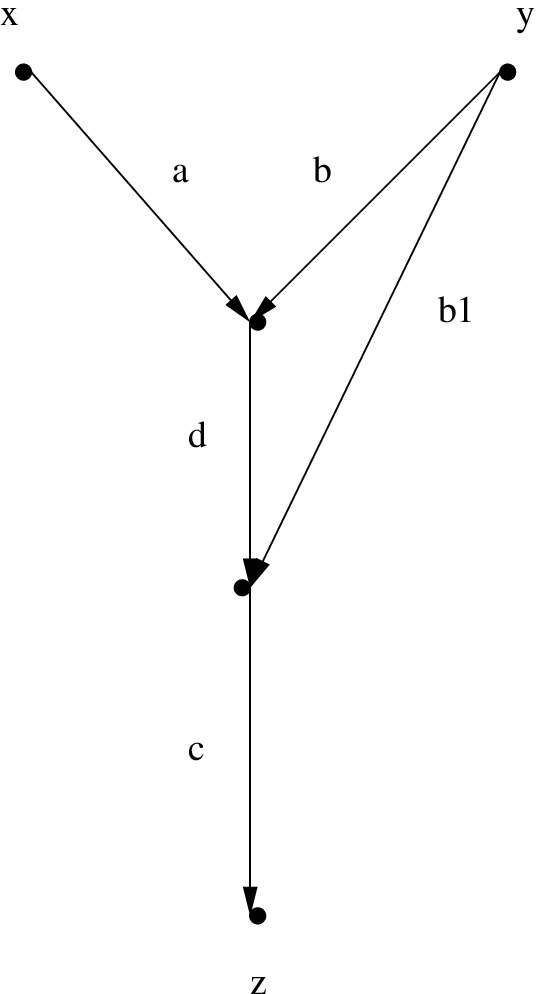}\caption{Two nodes communicating with one sink node via two relay nodes}\label{fig:4-node}
\end{center}
\end{figure}

\begin{lemma}\label{lemma:4node-eq}
Let $\b\triangleq\frac{b'}{b+b'}$. If $\b a + (1-\b)c \leq d$, then the network shown in Fig.~\ref{fig:4-node} is equivalent to a Y-network  $Y(a,b+b',c)$.
\end{lemma}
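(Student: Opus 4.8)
The plan is to prove the two inclusions $\Nc\subseteq Y(a,b+b',c)$ and $Y(a,b+b',c)\subseteq\Nc$ separately; only the second will use the hypothesis $\beta a+(1-\beta)c\le d$. It is convenient to label the relay adjacent to $x_1$ the \emph{first} relay — it receives $x_1$'s symbol over the capacity-$a$ link and $x_2$'s symbol over the capacity-$b$ link, and feeds the second relay over the capacity-$d$ link — and the other the \emph{second} relay, which additionally receives $x_2$'s symbol over the capacity-$b'$ link and feeds $y$ over the capacity-$c$ link.

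The upper bound needs no hypothesis. Coalesce the two relays into a single node; this is the node-merging operation of Section~\ref{s:simple} and can only enlarge the rate region. In the merged network the capacity-$d$ link becomes a self-loop, which we delete, and the two links out of $x_2$ become parallel links of capacities $b$ and $b'$ between the same ordered pair of nodes, which we replace by one link of capacity $b+b'$; what remains is exactly $Y(a,b+b',c)$, so $\Rc(\Nc)\subseteq\Rc(Y(a,b+b',c))$.

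For the lower bound, fix any block-length-$n$ code for $Y(a,b+b',c)$, in which the relay maps $A\in\{0,1\}^{na}$ from $x_1$ and $B\in\{0,1\}^{n(b+b')}$ from $x_2$ to $\Phi=\phi(A,B)\in\{0,1\}^{nc}$ and sends $\Phi$ to $y$. Following the proof of Lemma~\ref{lemma:Ynet}, assume the capacities rational (otherwise approximate) and use $\Nc$ $m$ times, with $m$ chosen so that $\beta m$ is an integer; designate $\beta m$ of the $m$ blocks \emph{primary} and the remaining $(1-\beta)m$ \emph{secondary}. On every block $x_1$ sends its $A$ to the first relay. On secondary blocks $x_2$ routes the whole of its $B$ to the first relay over the capacity-$b$ link; on primary blocks it routes the whole of its $B$ to the second relay over the capacity-$b'$ link. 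Because the fraction of primary blocks is $\beta=b'/(b+b')$, the total secondary-$B$ traffic $(1-\beta)m\cdot n(b+b')$ exactly equals the $mnb$ bits the capacity-$b$ link carries over $m$ uses, and likewise the primary-$B$ traffic exactly fills the capacity-$b'$ link; this balancing is the one counting identity I would verify carefully. The first relay then knows $A$ on every block and $B$ on the secondary blocks, so it computes $\Phi$ on each secondary block; over the capacity-$d$ link it sends this $\Phi$ for the secondary blocks and the raw $A$ for the primary blocks, a total of $[(1-\beta)c+\beta a]\,mn$ bits, which fits in the $mnd$ bits available precisely when $\beta a+(1-\beta)c\le d$. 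The second relay forwards the $\Phi$ it receives on secondary blocks and, on primary blocks, combines the $A$ relayed to it with the $B$ it got directly from $x_2$ to recompute $\Phi$; it then sends $\Phi$ to $y$ on every block over the capacity-$c$ link, which has just enough room. Since $y$ observes the same sequence $\Phi^{(1)},\ldots,\Phi^{(m)}$ it would under $m$ independent runs of the $Y$-code, it reconstructs whatever it could there, with the same per-block error; letting $\e\to0$ yields $Y(a,b+b',c)\subseteq\Nc$.

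The integrality choice of $m$, the remaining per-link counts, and the error bookkeeping are routine. The substantive step — and the one I would be most careful about — is the primary/secondary split in the ratio $b':b$: it is exactly what lets the first relay pre-compute $\Phi$ on just enough blocks that the capacity-$d$ link sees load $[\beta a+(1-\beta)c]\,mn$ rather than, say, the $(a+b)\,mn$ it would carry if $B$ were routed the same way on every block.
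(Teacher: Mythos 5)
Your proof is correct and is in substance the same as the paper's: your primary/secondary split in ratio $\beta:(1-\beta)$ is exactly the paper's decomposition of the network of Fig.~\ref{fig:4-node} into two capacity-disjoint Y-networks $Y\bigl((1-\beta)a,\,b,\,(1-\beta)c\bigr)$ (relay function computed at the first relay, result forwarded over $d$) and $Y\bigl(\beta a,\,b',\,\beta c\bigr)$ (raw $A$ forwarded over $d$, relay function computed at the second relay), which the paper then recombines by invoking Lemma~\ref{lemma:Ynet} rather than writing out the time-sharing schedule explicitly. Your load accounting on the $d$ link, $\beta a+(1-\beta)c\leq d$, and the upper bound via merging the two relays match the paper's argument.
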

\begin{proof}
Clearly a Y-network $Y(a,b+b',c)$ is an upper-bounding network for the network of Fig.~\ref{fig:4-node}. We show that if the constraints in the lemma is satisfied,  it also serves as a lower-bounding network.

To find a lower bounding network, consider breaking  the links in Fig.~\ref{fig:4-node} into parallel links as in Fig.~\ref{fig:4-node-links-divided}, where $a=a_1+a_2+a_3$, $b=b_1+b_2+b_3$, etc. The network contains two capacity-disjoint Y-networks as illustrated in  Fig.~\ref{fig:4-node-2-Y}.  Our goal is to combine these two Y-networks by applying Lemma \ref{lemma:Ynet}, and in order to be able to do this, we require
$a_2=d_2=\a a_1$, $b'_1=\alpha b_1$ and $c_1=d_1={c_2\over\a}$. The combination of these two Y-networks will be a Y-network $Y((1+\a)a_1,(1+\a)b_1,(1+\a)c_1)$ which is a lower-bounding network for our original network. Now choosing  $b_1=b$, $b'_1=b'$ and $\a=\frac{b'}{b}$ from the link capacities constraints, we should have
\begin{align}
(1+\frac{b'}{b})a_1 & \leq a,\nonumber\\
(1+\frac{b'}{b})c_1 & \leq c,\nonumber\\
\frac{b'}{b}a_1+c_1 & \leq d.
\end{align}

From these inequalities, if $\b a + (1-\b)c \leq d$, we can choose $a_1=(1-\b)a$ and $c_1=(1-\b)c$ and get a lower-bounding Y-network $Y(a,b+b',c)$.

\begin{figure}[h]
\begin{center}
\psfrag{a1}[l]{\footnotesize{$a_1$}}
\psfrag{a2}[l]{\footnotesize{$a_2$}}
\psfrag{a3}[l]{\footnotesize{$a_3$}}
\psfrag{b1}[l]{\footnotesize{$b_1$}}
\psfrag{b2}[l]{\footnotesize{$b_2$}}
\psfrag{b11}[l]{\footnotesize{$b'_1$}}
\psfrag{b12}[l]{\footnotesize{$b'_2$}}
\psfrag{c1}[l]{\footnotesize{$c_1$}}
\psfrag{c2}[l]{\footnotesize{$c_2$}}
\psfrag{c3}[l]{\footnotesize{$c_3$}}
\psfrag{d1}[l]{\footnotesize{$d_1$}}
\psfrag{d2}[l]{\footnotesize{$d_2$}}
\psfrag{d3}[l]{\footnotesize{$d_3$}}
\psfrag{x}[r]{}
\psfrag{y}[r]{}
\psfrag{z}[c]{}
\includegraphics[width=3.5cm]{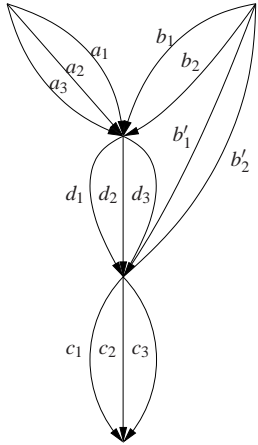}\caption{Breaking the links in Fig.~\ref{fig:4-node}  into parallel links }\label{fig:4-node-links-divided}
\end{center}
\end{figure}

\begin{figure}
\centering
\psfrag{a1}[l]{$a_1$}
\psfrag{a2}[l]{$a_2$}
\psfrag{a3}[l]{$a_3$}
\psfrag{b1}[l]{$b_1$}
\psfrag{b2}[l]{$b_2$}
\psfrag{b11}[l]{$b'_1$}
\psfrag{b12}[l]{$b'_2$}
\psfrag{c1}[l]{$c_1$}
\psfrag{c2}[l]{$c_2$}
\psfrag{c3}[l]{$c_3$}
\psfrag{d1}[l]{$d_1$}
\psfrag{d2}[l]{$d_2$}
\psfrag{d3}[l]{$d_3$}
\mbox{\subfigure{\includegraphics[width=2.5cm]{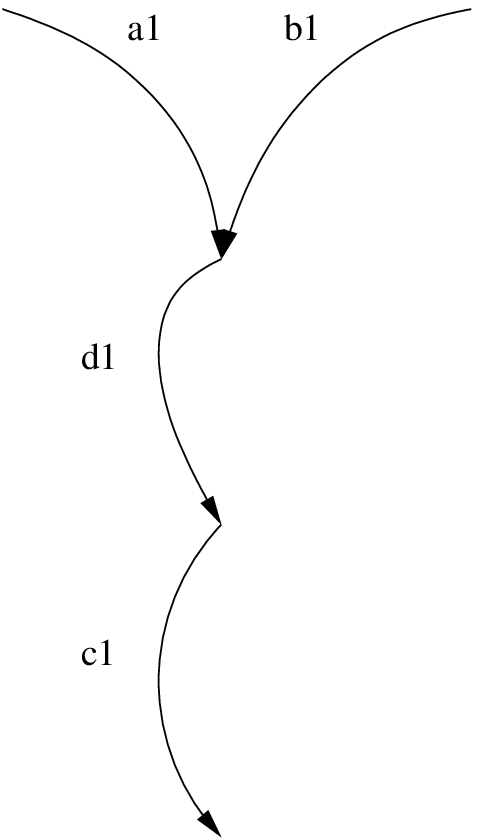}}\quad
\subfigure{\includegraphics[width=2.5cm]{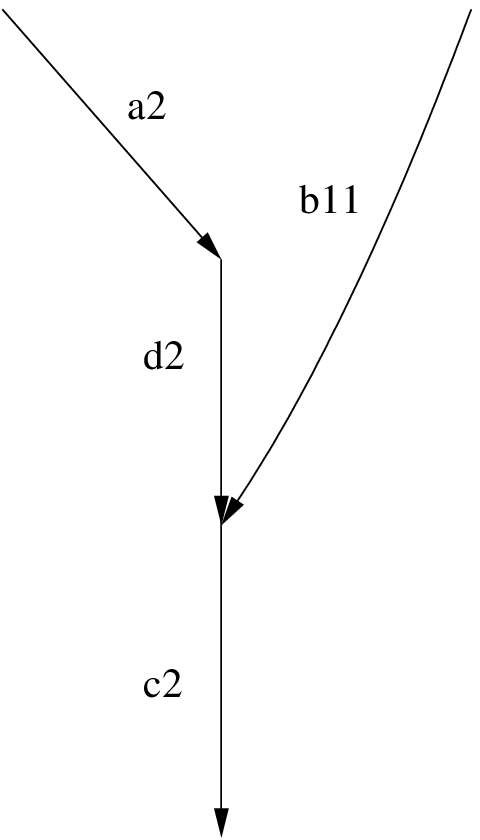} }}
\caption{Breaking the network in Fig.~\ref{fig:4-node} into two Y-networks} \label{fig:4-node-2-Y}
\end{figure}
\end{proof}

\begin{figure}[h]
\begin{center}
\psfrag{a}[l]{\footnotesize {$a$}}
\psfrag{b}[l]{\footnotesize {$b_0$}}
\psfrag{c}[l]{$c$}
\psfrag{d1}[l]{\footnotesize {$d_1$}}
\psfrag{d2}[l]{\footnotesize {$d_2$}}
\psfrag{dk}[l]{\footnotesize {$d_k$}}
\psfrag{e1}[l]{\footnotesize{$b_1$}}
\psfrag{e2}[l]{\footnotesize{$b_2$}}
\psfrag{ek}[l]{\footnotesize{$b_k$}}
\psfrag{o}[c]{$\vdots$}
\includegraphics[width=3.5cm]{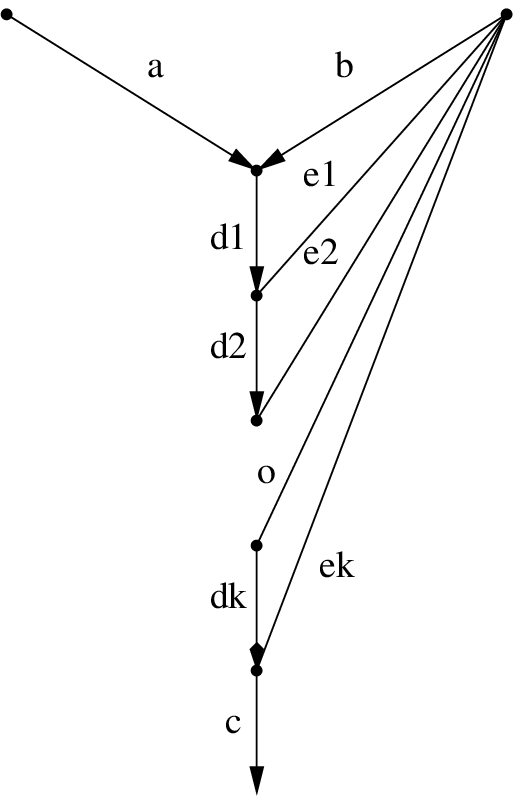}\caption{Generalization of the network of Fig.~\ref{fig:4-node}}\label{fig:4-node-gen}
\end{center}
\end{figure}

In order to get a better understanding of the required constraint stated in Lemma \ref{lemma:4node-eq}, consider the following special cases:
\begin{itemize}
\item[i.] $b\ll b' \to d \geq a$,
\item[ii.] $b=b'\to d\geq \frac{a+c}{2}$,
\item[iii.] $b\gg b' \to d\geq c$.
\end{itemize}

Again consider the network shown in Fig.~\ref{fig:4-node} where all the links have capacity $1$ except for the link of capacity $d$. From Lemma \ref{lemma:4node-eq}, for $d \geq  0.5\times 1 +0.5\times 1=1$,  this network is equivalent to a Y-network  $T(1,2,1)$. Our proof approach in Lemma \ref{lemma:4node-eq} does not say anything about the case where $d<1$. It might be the case that even for some values $d<1$,  still this equivalence holds. As we show through an example this cannot happen, and for $d<1$ the Y-network of $Y(1,2,1)$ is a strict upper bound for our network. Assume that $b_1$  and $(b_2,b_3)$ are available at $x_1$ and $x_2$ respectively, where $b_i\in\{0,1\}$. The goal is reconstructing $b_1b_2+\bar{b}_1\bar{b}_2+b_3$ at node $y$, where all operations are in $GF(2)$. This can be done easily in the Y-network  $Y(1,2,1)$, but is impossible in the original network for $d<1$.


Fig.~\ref{fig:4-node-gen} shows a generalization of the network of Fig.~\ref{fig:4-node}, where instead of 2 intermediate nodes, there are $k+1$ intermediate nodes.  Let $\a_0\triangleq1$, $\a_i\triangleq b_i/b_0$, for $i\in\{1,\ldots,k\}$,
\[
a_1\triangleq\frac{a}{1+\sum\limits_{i=1}^k \alpha_i},
\]
and
\[
c_1\triangleq\frac{c}{1+\sum\limits_{i=1}^k \alpha_i}.
\]
By extending the proof of Lemma \ref{lemma:4node-eq} to this more general case, we get Lemma \ref{lemma:3}.
\begin{lemma}\label{lemma:3}
If, for $i\in\{1,\ldots,k\}$,
\begin{align}
\sum\limits_{j=0}^{i-1} \a_j c_1 + \sum\limits_{j=i}^{k} \a_j  a_1\leq d_i,
\end{align}
then the network shown in Fig.~\ref{fig:4-node-gen} is equivalent to a Y-network $Y(a,\sum\limits_{i=0}^k b_i,c)$.
\end{lemma}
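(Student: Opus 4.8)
The plan is to run the proof of Lemma~\ref{lemma:4node-eq} with $k+1$ overlapping Y-networks in place of two, one for each relay, and the outer bound is the immediate half: coalescing the relays $R_0,\dots,R_k$ of Fig.~\ref{fig:4-node-gen} into a single node yields an upper-bounding network (merging nodes is the same as inserting infinite-capacity links between them); after the merge the internal links $d_1,\dots,d_k$ are self-loops and disappear, the $k+1$ parallel links out of $x_2$ combine into one link of capacity $\sum_{i=0}^k b_i$, and what remains is exactly $Y\big(a,\sum_{i=0}^k b_i,c\big)$. So only the matching lower bound needs work.

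For the lower bound I would break every link of Fig.~\ref{fig:4-node-gen} into parallel sub-links and, inside the resulting network, identify $k+1$ capacity-disjoint Y-networks $\Nc^{(0)},\dots,\Nc^{(k)}$, all with sources $x_1,x_2$ and sink $y$, where $\Nc^{(j)}$ takes $R_j$ as its relay node: it routes the $x_1$-stream along the path $x_1\to R_0\to\cdots\to R_j$, receives the $x_2$-stream directly over link $b_j$, and routes the relay output along the path $R_j\to\cdots\to R_k\to y$. On each hop of the first path I reserve a sub-link of capacity $\alpha_j a_1$, I use all of link $b_j=\alpha_j b_0$, and on each hop of the second path I reserve a sub-link of capacity $\alpha_j c_1$; the relays interior to the two paths then merely forward, so $\Nc^{(j)}$ behaves as the Y-network whose capacities are $\alpha_j$ times those of the base network $\Nc^{(0)}=Y(a_1,b_0,c_1)$. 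The resulting load on each link of Fig.~\ref{fig:4-node-gen} is then: link $a$ carries $\sum_{j=0}^k\alpha_j a_1=\big(1+\sum_{i=1}^k\alpha_i\big)a_1=a$; link $c$ carries $\big(1+\sum_{i=1}^k\alpha_i\big)c_1=c$; link $b_j$ carries $\alpha_j b_0=b_j$; and link $d_i$, which lies on the $x_1$-input path of $\Nc^{(j)}$ for every $j\ge i$ and on the output path of $\Nc^{(j)}$ for every $j<i$, carries $\sum_{j=i}^k\alpha_j a_1+\sum_{j=0}^{i-1}\alpha_j c_1$, which is at most $d_i$ exactly by the hypothesis of the lemma. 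Hence Fig.~\ref{fig:4-node-gen} contains the union of $\Nc^{(0)},\dots,\Nc^{(k)}$ (overlapping only in the nodes $x_1,x_2,y$ and the relays) as a sub-network.

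To finish I combine these $k+1$ Y-networks into one. Since $\Nc^{(j)}$ has $\alpha_j$ times the link capacities of $\Nc^{(0)}$, this is $k$ successive applications of Lemma~\ref{lemma:Ynet}, or, more directly, one pass of the block-coding argument of that lemma: run a block-length-$n$ code for the target Y-network over $m$ uses of the collection, letting $\Nc^{(j)}$ serve as the relay for $\kappa_j=\alpha_j m/\big(\big(1+\sum_{i=1}^k\alpha_i\big)n\big)$ of the $m/n$ sessions; the $\kappa_j$ sum to $m/n$, and because the target code's three rates are $\big(1+\sum_{i=1}^k\alpha_i\big)$ times the capacities of $\Nc^{(0)}$ while $\Nc^{(j)}$'s three links are $\alpha_j$ times those same capacities, each link of $\Nc^{(j)}$ carries exactly the traffic of its $\kappa_j$ sessions, after which $y$ assembles all session outputs and applies the original decoder. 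As in Lemma~\ref{lemma:4node-eq}, the $\alpha_i$ may be assumed rational so that such an $m$ exists, the general case following by approximation. The combined network has capacities $\big(1+\sum_{i=1}^k\alpha_i\big)$ times those of $\Nc^{(0)}$, and since $\big(1+\sum_{i=1}^k\alpha_i\big)a_1=a$, $\big(1+\sum_{i=1}^k\alpha_i\big)c_1=c$, and $\big(1+\sum_{i=1}^k\alpha_i\big)b_0=b_0+\sum_{i=1}^k b_i=\sum_{i=0}^k b_i$, it is precisely $Y\big(a,\sum_{i=0}^k b_i,c\big)$; thus this Y-network is also a lower-bounding network, and with the outer bound the two networks are equivalent.

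The only delicate point is the link accounting in the second paragraph: one has to thread the $x_1$-input path and the output path of each $\Nc^{(j)}$ through the chain of relays so that the load on the internal link $d_i$ works out to be precisely $\sum_{j=0}^{i-1}\alpha_j c_1+\sum_{j=i}^k\alpha_j a_1$, which is what makes the lemma's stated inequality the exact feasibility condition. Everything else — the node merge for the outer bound, the fact that a relay can forward transparently on the sub-links it is not relaying on, and the time-sharing combination — is routine or already in the paper.
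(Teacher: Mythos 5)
Your proof is correct and is exactly the extension the paper has in mind: the paper gives no explicit proof of Lemma~\ref{lemma:3}, asserting only that it follows "by extending the proof of Lemma~\ref{lemma:4node-eq}," and your decomposition into $k+1$ capacity-disjoint Y-networks $\Nc^{(j)}$ (with the load on $d_i$ coming out to $\sum_{j=0}^{i-1}\a_j c_1+\sum_{j=i}^{k}\a_j a_1$, i.e.\ precisely the stated hypothesis) together with the iterated/time-shared application of Lemma~\ref{lemma:Ynet} is that extension. No gaps.
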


Another possible generalization of a Y-network is shown in Fig.~\ref{fig:4-node-gen2}. Here, while the number of relay nodes is kept as two, the number of source nodes is increased.  For this network, we can prove Lemma \ref{lemma:4} with straightforward extension of Lemma \ref{lemma:4node-eq}.
\begin{lemma}\label{lemma:4}
For $i=\{2,\ldots,k\}$, let
\[
\beta_i \triangleq \frac{b_i}{b'_i}.
\]
Then, if
\begin{align}
d \geq \sum\limits_{i=1}^n \b_i a+ \sum\limits_{i=1}^n (1-\b_i) c,
\end{align}
the two intermediate nodes can be combined without changing the performance.
\end{lemma}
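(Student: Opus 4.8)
The plan is to mirror, and then superpose, the argument used for Lemma~\ref{lemma:4node-eq}. The key observation is that the network of Fig.~\ref{fig:4-node-gen2} can be viewed as several copies of the two--intermediate--node gadget analyzed in Lemma~\ref{lemma:4node-eq} --- one copy for each source node $x_i$, the $i$-th copy using parameters $a$, $b_i$, $b'_i$, $c$ --- and that all of these copies share only the two intermediate nodes and the single link of capacity $d$ that joins them. So the proof reduces to running the Lemma~\ref{lemma:4node-eq} construction inside every copy and checking that the one genuinely shared resource, link $d$, is large enough to serve all copies at once.

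First I would dispose of the outer bound: combining the two intermediate nodes is a special case of node coalescing (Section~\ref{s:simple}), equivalently it adjoins a pair of infinite--capacity links between them, so the single--intermediate--node network is always an upper bounding network. It therefore remains to produce that network as an inner bound under the stated hypothesis. For this I would break every link into parallel sub--links as in Fig.~\ref{fig:4-node-links-divided} and, for each source $x_i$, isolate within its copy two capacity--disjoint Y--sub--networks exactly as in the proof of Lemma~\ref{lemma:4node-eq}: a base copy routed through the first intermediate node, and a copy scaled by the factor $\alpha_i$ determined by $b_i$ and $b'_i$, routed through the second intermediate node and forwarded onto link $d$. Applying Lemma~\ref{lemma:Ynet} to each such pair merges the two intermediate nodes within that copy, and doing this for every $i$ turns the whole network into the claimed single--intermediate--node network. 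Repeating the capacity accounting of Lemma~\ref{lemma:4node-eq} for the $i$-th copy --- with the internal split $a_1^{(i)} = (1-\beta_i)a$ and $c_1^{(i)} = (1-\beta_i)c$ --- shows that copy $i$ can be realized provided a budget of $\beta_i a + (1-\beta_i)c$ of capacity is reserved on link $d$. Since the copies are otherwise capacity--disjoint (sharing the intermediate nodes is harmless, since each node simply applies a distinct function to a distinct set of incoming sub--links), all copies can be realized simultaneously precisely when $\sum_i \beta_i a + \sum_i (1-\beta_i)c \le d$, which is the hypothesis; this yields the inner bound and hence the equivalence.

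The step I expect to be the main obstacle is making ``otherwise capacity--disjoint'' precise: after the link--splitting one must verify that on every link \emph{other} than $d$ there is enough capacity to carry all the base and scaled Y--networks simultaneously (for link $d$ this is exactly the hypothesis, but the remaining links must be re--examined once there are many copies), and one must run the block--length--matching construction of Lemma~\ref{lemma:Ynet} for all copies over a common number of channel uses, e.g.\ a common multiple of the per--copy block lengths. A secondary point worth checking is that the $\beta_i$ lie in $[0,1]$, so that the coefficients $1-\beta_i$ appearing in the split are genuine fractions; this is where the precise normalization of $\beta_i$ (compare the definition of $\beta$ in Lemma~\ref{lemma:4node-eq}) enters the argument.
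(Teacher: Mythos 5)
The paper offers no proof of this lemma beyond the remark that it is a ``straightforward extension'' of Lemma~\ref{lemma:4node-eq}, so your argument has to stand on its own --- and as written it has a genuine gap. Your decomposition into per-source ``copies'' of the Lemma~\ref{lemma:4node-eq} gadget asserts that the copies ``share only the two intermediate nodes and the single link of capacity $d$.'' That is not true: every copy also contains the link of capacity $a$ and the link of capacity $c$, and, crucially, each copy must use \emph{all} of that capacity. This is forced by what the $i$-th copy is supposed to deliver, namely the inner bound $Y(a,\,b_i+b'_i,\,c)$: a Y-network with input rate $a$ and output rate $c$ saturates those two links (in the proof of Lemma~\ref{lemma:4node-eq} one has exactly $(1+\alpha)a_1=a$ and $(1+\alpha)c_1=c$). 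So $k$ simultaneous copies would place a demand of $k a$ and $k c$ on links whose capacities are $a$ and $c$; the ``re-examination of the remaining links'' that you defer to the end is not a technicality --- it fails. And rescaling each copy to use only a share $a^{(i)}$ of $a$ and $c^{(i)}$ of $c$ does not rescue the argument: the superposition then yields $k$ disjoint relay nodes, the $i$-th of which sees only source $i$'s data together with a slice of $x$'s data, which is a strictly weaker network than the single coalesced relay the lemma claims, so the target network is not even reached.

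A correct extension has to treat all sources jointly: split the coding sessions of the coalesced-relay code between the two physical relays, with relay $u$ computing the output for a fraction $\theta$ of sessions and forwarding the $c$-bit answers over $d$, and relay $v$ computing the rest after receiving $x$'s data over $d$. The real difficulty then surfaces: the single fraction $\theta$ must satisfy $\theta(b_i+b'_i)\le b_i$ and $(1-\theta)(b_i+b'_i)\le b'_i$ for every $i$ simultaneously, which pins $\theta=b_i/(b_i+b'_i)$ for all $i$ and is impossible when the ratios $b_i:b'_i$ differ across sources --- unless the shortfall is rerouted through $d$, which changes the capacity accounting on $d$ and hence the form of the sufficient condition. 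Your proposal never confronts this, and your (correct) unease about the normalization $\beta_i=b_i/b'_i$, which need not lie in $[0,1]$, versus $\beta=b'/(b+b')$ in Lemma~\ref{lemma:4node-eq}, is a symptom of the same unresolved issue rather than a separate loose end. As it stands the argument does not establish the lemma.
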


\begin{figure}[h]
\begin{center}
\psfrag{x}[r]{\footnotesize {$x$}}
\psfrag{y}[r]{\footnotesize {$y$}}
\psfrag{z}[r]{\footnotesize {$z$}}
\psfrag{a}[l]{\footnotesize {$a$}}
\psfrag{b}[l]{\footnotesize {$b_1$}}
\psfrag{b1}[l]{\footnotesize {$b'_1$}}
\psfrag{b2}[l]{\footnotesize {$b_2$}}
\psfrag{b3}[l]{\footnotesize {$b'_2$}}
\psfrag{b4}[r]{\footnotesize {$b_k$}}
\psfrag{b5}[r]{\footnotesize {$b'_k$}}
\psfrag{dot}[c]{\footnotesize{$\ldots$}}
\psfrag{o}[c]{$\vdots$}
\includegraphics[width=5.5cm]{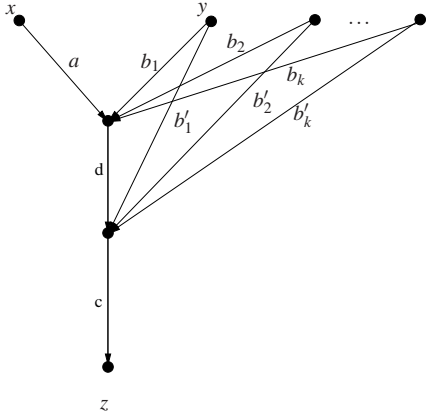}\caption{Another generalization of the network of Fig.~\ref{fig:4-node}}\label{fig:4-node-gen2}
\end{center}
\end{figure}

Again consider the network shown in Fig.~\ref{fig:4-node}, and assume that $d<\b a + (1-\b)c$. As shown in the previous section, a Y-network $Y(a,b+b',c)$, $\Nc_u$, serves as a strict upper bound for the network of Fig.~\ref{fig:4-node}.  In order to get a lower bounding Y-network,  we reduce the  capacities of the links $a$ and $c$ by a factor $\delta$, such that $d=\b \d a + (1-\b)\d c$. Hence,
\[
\d = \frac{d}{\b a + (1-\b)c},
\]
and from our assumption $0 < \d < 1 $. Using this, we can again invoke Lemma \ref{lemma:4node-eq} and get a lower-bounding Y-network  of $(\d a, b+b',\d c)$ denoted by $\Nc_l$. Comparing the link capacities of the upper and lower bounding Y-networks, their difference factor is $\Delta(\Nc_l,\Nc_u)=\d$. If $d\ll \b a + (1-\b)c$, then $\d \ll 1$,  meaning that the bounds become very loose in such cases.

To solve this problem, consider another possible pair of upper and lower bounding networks shown in Fig.~\ref{fig:4-node-bounds}, which have a topology different than a Y-network. Here the assumption is that $d<c$. It is easy to check that these are indeed upper and lower bounding networks. For deriving the upper-bound, the incoming information from links $a$ and $b$ are assumed to be transmitted directly to the final node, and since the incoming capacity of the final node is $c$, the information sent to it from link $b'$ can be captured by a direct link of capacity $\min(b',c)$.   For the lower bound, since $d<c$, all the information on link $c$ can be directed to the final node. By this strategy, the remaining unused capacity of link $c$ is $c-d$ which can be dedicated to link $b'$. If $\min(c-d,b')=b'$, then $c\leq d+b'$, then the upper and lower bounding networks coincide, and we have an equivalent network. The more interesting case is when $c > d+b'$, and therefore $\min(c-d,b')=c-d$.  In this case the difference factor of the upper and lower bounding networks is at least $1-d/c$. Choosing the best bound depending on the link capacities, given $a,b,b'$ and $c$ the worst difference factor is
\begin{align}
\min\limits_d \max\{ \frac{d}{\b a + (1-\b)c},1-\frac{d}{c}\} =\frac{c}{c+\beta a + (1-\b)c}.
\end{align}
As an example, for the case where $a=b=b'=c=1$, the worst case difference factor is $0.5$ which corresponds to $d=0.5$. This means that choosing the best pair of bounds for different values of $d$, the difference factor of our selected pair is always lower bounded by $0.5$.

\begin{figure}
\centering
\psfrag{a}[l]{$a$}
\psfrag{b}[c]{$b$}
\psfrag{c}[c]{$d$}
\psfrag{e1}[l]{$\min(b',c)$}
\psfrag{e2}[l]{$\min(b',c-d)$}
\mbox{\subfigure{\includegraphics[width=2.5cm]{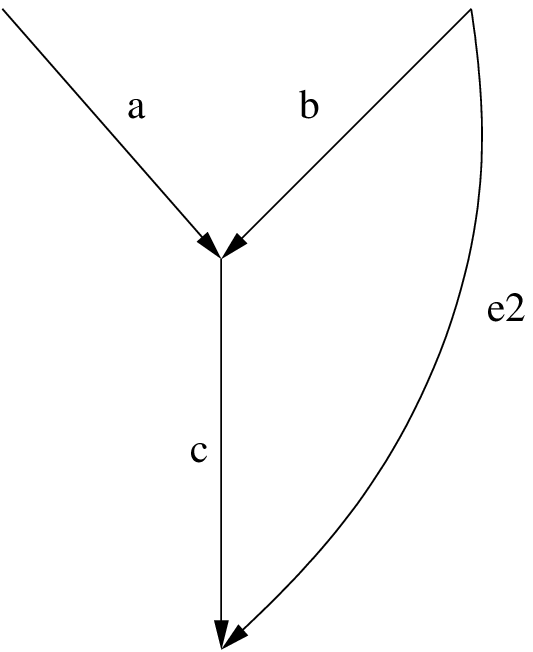}}\quad\quad\quad
\subfigure{\includegraphics[width=2.5cm]{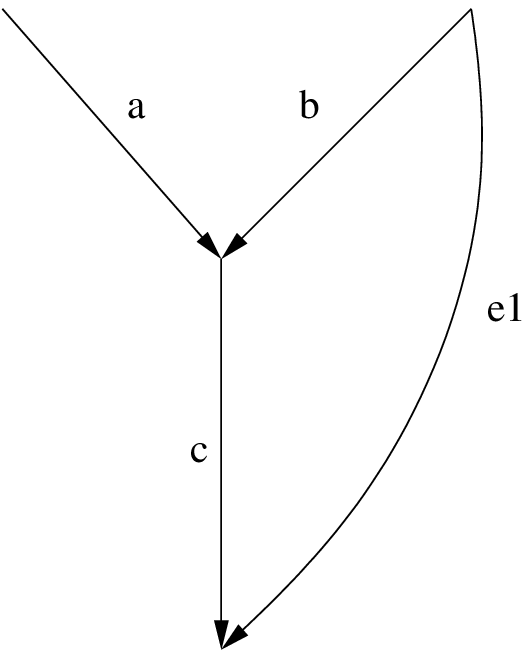} }}
\caption{A pair of upper and lower bounding networks for the network of Fig.~\ref{fig:4-node}.} \label{fig:4-node-bounds}
\end{figure}

\begin{figure}
\centering
  \subfigure[Network $\Nc_1$]{  \includegraphics[scale=0.35]{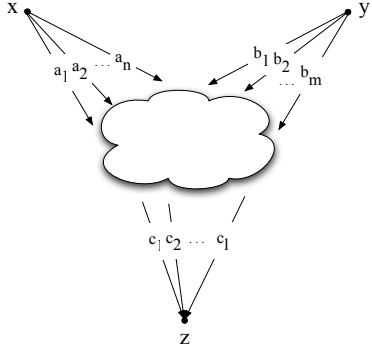}}\hspace{0.5 in}
   \subfigure[Example 1]{\includegraphics[scale=0.35]{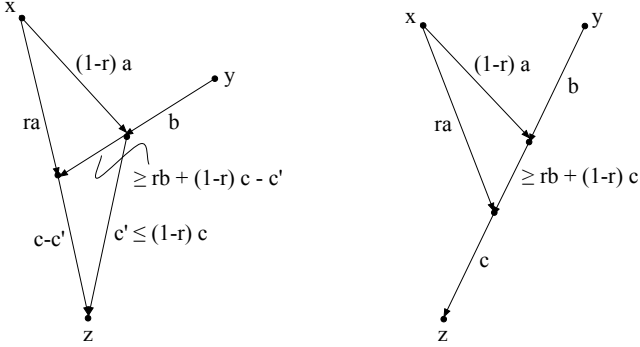}}\hspace{0.5 in}
  \subfigure[Example 2]{\includegraphics[scale=0.35]{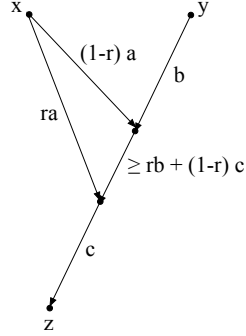}}\\
  \caption{A family of components that can be replaced with an equivalent component. The labels represent the link capacities, which satisfy $a =a_1+a_2+\dots a_n$, $b=b_1 +b_2+\dots b_m$, $c=c_1+c_2+\dots c_l$, and $0 <r <1$. }\label{figure:cloudy-y}\end{figure}

We conclude this section by discussing how the approach in the section generalizes to a larger class of network topologies.  Consider the network  $\Nc_1$ shown in  Fig.~\ref{figure:cloudy-y}. A Y-network $Y(\sum a_i,\sum b_i,\sum c_i)$ is always an upper bound for the network $\Nc_1$.  On the other hand, if we can show, as in the proof of Lemma~\ref{lemma:4node-eq}, that $\Nc_1$ contains two capacity-disjoint Y-networks of the form $Y(\alpha \sum a_i,\alpha \sum b_i,\alpha \sum c_i)$ and $Y((1 -\alpha) \sum a_i,(1 -\alpha) \sum b_i,(1 -\alpha) \sum c_i)$, then $Y(\sum a_i,\sum b_i,\sum c_i)$ is also a lower bound for, and thus equivalent to, $\Nc_1$.  Subfigures (b) and (c) provide two simple examples of networks where this is the case. This construction can also be generalized by replacing the basic $Y$-shaped topology with star-shaped topologies with arbitrary numbers of inputs and outputs.

\section{Effect of link capacities}\label{s:effect}

Consider two networks $\Nc_1$ and $\Nc_2$ which have identical topologies and link capacities except for some link $e$ which has capacity $C_{e,1}$  in $\Nc_1$ and capacity $C_{e,2}<C_{e,1}$ in $\Nc_2$. Let $\Rc_1$ and $\Rc_2$ denote the set of achievable rates on $\Nc_1$ and $\Nc_2$ respectively. The question is how this difference affects the performances of these two networks.  One way of doing this comparison is based on what was mentioned earlier, i.e., to compute the ratio between  $C_{e,1}$ and  $C_{e,2}$. However, this might not always give the best possible bound. The reason is that it might be case that while  $C_{e,1}$ and  $C_{e,2}$ are both small compared to the capacity region of the networks,  $\Delta(\Nc_l,\Nc_u)=C_{e,1}/C_{e,2}$ is very large. In this section, we study this problem in more details.

Note that the link of capacity $C_{e,1}$ in network $\Nc_1$ into two parallel links of capacities $C_{e,1}-C_{e,2}$ and $C_{e,2}$. Clearly this process does not affect the capacity region of $\Nc_1$. By this transformation, network $\Nc_2$ is equivalent to this new network with link of capacity $\e \triangleq C_{e,1}-C_{e,1}$ being removed. Therefore, in the rest of this section, instead of changing the capacity of a link, we assume that a link of capacity $\e$ is removed as shown in Fig.~\ref{fig:networks}, and prove that at least in some cases changing the capacity of a link by $\e$ cannot have an effect larger than $\e$ on the set of achievable rates, \ie if the rate vector $\mathbf{R}$ is achievable on $\Nc_1$, rate vector $\mathbf{R}-\e\mathbf{1}$, where $\mathbf{1}$ denotes an all-one vector of length $|\Sc|$, is achievable on $\Nc_2$ as well.

One such example is the case of multicast networks. In that case, the capacity of the network is determined by the values of the cuts from the sources to the sinks. Therefore, since removing a link of capacity $\e$ does not change the values of the cuts by more than $\e$, the capacity of the network will not be affected by more than $\e$.

Another example, is the case where all sources are connected directly by a link to a super-source node which has therefore access to the information of all sources.  Without loss of generality, let $\Sc=\{1,2,\ldots,|\Sc|\}$. As before, assume that each sink node $t\in\Tc$ is interested in recovering a subset of sources denoted by $\b(t)$.

\begin{theorem}
For the described network with a super source-node and arbitrary demands, removing a link of capacity $\e$ can change the capacity region by at most $\e$.
\end{theorem}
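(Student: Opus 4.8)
The plan is to take a length-$n$ code achieving rate vector $\mathbf{R}$ on $\Nc_1$ and convert it, with a small loss of $\e$ per source, into a code on $\Nc_2$, where $\Nc_2$ is $\Nc_1$ with one link $e_0$ of capacity $\e$ deleted. The key structural feature to exploit is the super-source node $s^\star$: it has access to all messages $M_1,\dots,M_{|\Sc|}$, so whatever bits the missing link $e_0$ would have carried can be \emph{re-derived at $s^\star$} — they are a deterministic function of the inputs of $e_0$'s tail, which in turn is a function of what $s^\star$ can see. So the idea is to reroute the $n\e$ bits that $e_0$ carried so that they emanate from $s^\star$ along existing links, paying for the extra traffic by shaving $\e$ off each source rate.

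First I would set up the reduction precisely: delete $e_0=(u,w)$; we must deliver to $w$ the function $g_{e_0}(\cdot)$ that it used to receive on $e_0$. Since $g_{e_0}$ is a function of $\In(u)$, and (by acyclicity and the super-source structure) everything flowing into $u$ is ultimately a function of the source messages, the $n\e$ bits $g_{e_0}$ are a function of $(M_1,\dots,M_{|\Sc|})$, hence computable at $s^\star$. Next, I would lower each source rate from $R_s$ to $R_s-\e$; this frees up $n\e$ bits of message ``room'' at $s^\star$ for each source, and in particular frees capacity on the direct source-to-super-source links and on $s^\star$'s outgoing links. The technical heart is a routing/flow argument: I want to push an extra $n\e$ bits from $s^\star$ to $w$ along a path in $\Nc_2$ using the slack created by the rate reduction. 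Here I would use that the rate reduction slackens every link on \emph{every} source's path by $\e$ (because all of a source's information is now $\e$ smaller), so in particular there is a source $s$ with a path through $w$, or more carefully: the cut separating $s^\star$ from $w$ had value at least the old flow plus $\e$ once we reduce demands, leaving an augmenting path of capacity $\e$. Block-Markov/time-sharing over many uses of the network handles the fact that $\e$ may not divide things evenly, exactly as in the proof of Lemma~\ref{lemma:Ynet}.

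The main obstacle I anticipate is the routing step — showing that the $n\e$ recomputed bits can actually be \emph{delivered} to $w$ using only the capacity freed up by dropping each $R_s$ by $\e$. Deleting $e_0$ removes capacity in the \emph{interior} of the network, whereas reducing source rates frees capacity \emph{near the sources}; bridging the two requires that the freed near-source capacity can be routed all the way to $w$. The cleanest way I see is: because $s^\star$ observes all sources, we can let $s^\star$ simply \emph{recompute the entire codebook of $\Nc_1$} internally and treat $e_0$ as an ``internal'' wire of $s^\star$; then the only real constraint is that $s^\star$'s genuine output links must still carry what they carried before, which they can, since we have not changed them — the deletion of $e_0$ only matters for node $w$, and $w$'s missing input can be re-supplied only if $w$ is reachable from $s^\star$ with $\e$ spare capacity. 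So the argument really reduces to: ``in $\Nc_2$, after dropping all demands by $\e$, there is an $\e$-capacity flow from $s^\star$ to $w$.'' I would prove this by a min-cut argument — any $s^\star$-$w$ cut in $\Nc_2$ that had small residual capacity would, adding back $e_0$, give an $s^\star$-$w$ cut in $\Nc_1$ too small to support the original code — combined with the observation that sinks downstream of $w$ needing $w$'s output will then see their decoders fed correctly. Ties between several sinks and the bookkeeping of $\mathbf{R}-\e\mathbf{1}$ vs.\ per-link $\e$ are routine once the flow exists.
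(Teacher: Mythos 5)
There is a genuine gap, and it sits exactly where you flagged it: the routing step. Your plan is to recompute the $n\e$ bits of the deleted link $e_0=(u,w)$ at the super-source and push them to $w$ through ``slack'' created by dropping each $R_s$ to $R_s-\e$. But in a network-coded solution the link signals are functions $g_e$ with \emph{fixed} output lengths $nC_e$; lowering the message rates does not reduce what any interior link carries, and there is no flow decomposition that lets you identify $\e$ units of freed capacity on the links of some $s^\star$--$w$ path. To actually create that slack you would have to redesign the interior code, which is the thing being proved. The min-cut justification also does not hold up: an $s^\star$--$w$ cut for an \emph{intermediate} node $w$ is not constrained by the achievability of $\mathbf{R}$ (cut-set bounds constrain cuts separating sources from \emph{sinks}), so there is no contradiction in such a cut having zero residual capacity in $\Nc_2$. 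Indeed the paper's own example after Lemma~\ref{lemma:4node-eq} (reconstructing $b_1b_2+\bar b_1\bar b_2 + b_3$) illustrates that shaving interior capacity can destroy functionality in ways that no amount of near-source slack repairs by rerouting.

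The paper's proof avoids delivery of the missing bits altogether. It first passes to a \emph{zero-error} code on $\Nc_1$ (possible because the super-source makes this a single-source problem), then bins the $2^{n\sum_s R_s}$ message vectors according to the $n\e$ bits actually sent on $e_0$; by pigeonhole some bin $\mathcal{M}_0$ has at least $2^{n(\sum_s R_s-\e)}$ elements. Hard-wiring $v_2$'s input to that bin's bit pattern makes every message in $\mathcal{M}_0$ decodable with zero error on $\Nc_2$. The remaining work is to show that a uniform input on $\mathcal{M}_0$, viewed as a deterministic broadcast channel from the super-source to the sinks, supports \emph{independent} per-source rates $R_s-\e$; this follows from the Marton--Pinsker DBC region since $H(U_{\Ac})\ge n(\sum_{s\in\Ac}R_s-\e)$ for every $\Ac\subseteq\Sc$. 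That entropy/DBC step is the idea your proposal is missing, and without it (or a correct substitute for the routing claim) the argument does not go through.
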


\begin{figure}[t]
\begin{center}
\psfrag{N1}[l]{$\mathcal{N}_1$}
\psfrag{N2}[l]{$\mathcal{N}_2$}
\psfrag{e}[l]{$e$}
\psfrag{v1}[l]{$v_1$}
\psfrag{v2}[l]{$v_2$}
\includegraphics[width=9cm]{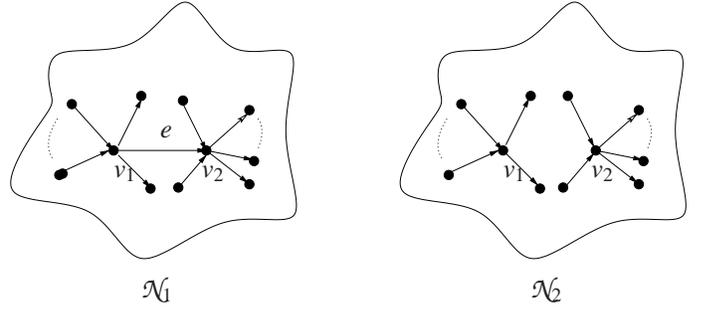}\caption{Networks $\Nc_1$ and $\Nc_2$ which are with and without link $e$ respectively.}\label{fig:networks}
\end{center}
\end{figure}

\begin{proof}

Assume that the rate vector $\mathbf{R}=(R_1,R_2,\ldots,R_{|\Sc|})$ is achievable on $\Nc_1$. Since in the single source case, capacity regions corresponding to zero and asymptotically zero probability of errors coincide \cite{ChanG:10}, we can assume  there exist a coding scheme of block length $n$ that achieves rate $\mathbf{R}$ on $\Nc_1$ with zero probability of error.

Based on this zero-error code, we construct another code for network $\Nc_2$ that achieves rate $\mathbf{R}-\e\mathbf{1}$ with asymptotically zero probability of error.

By our assumption about the network structure, the super source-node observes a  message vector $\mathbf{M}=(M_1,\ldots,M_{|\Sc|})$, where $M_s\in\{1,2,\ldots,2^{nR_s}\}$, for $s\in\{1,2,\ldots,|\Sc|\}$.

Now consider the link $e=(v_1,v_2)$ in network $\Nc_1$ which has been removed in network $\Nc_2$. During the coding process on $\Nc_1$, the bits sent across this link can take on at most $2^{n\e}$ different values. Consider binning the message vectors $\mathbf{M}=(M_1,M_2,\ldots,M_{|\Sc|})$ into $2^{n\e}$ different bins based on the bit stream  sent over this link during their transmission. Since the code is a deterministic code, each message vector only corresponds  to one bin. Since there exist $2^{n\sum_{s\in\Sc} R_s}$ distinct message vectors,  based on the Pigeonhole principle, there will be at least one bin with more that $2^{n(\sum_{s\in\Sc}R_s-\e)}$ message vectors. Denote  the message vectors contained in this bin by the set $\mathcal{M}_0$.

In $\Nc_2$, no message can be sent from $v_1$ to $v_2$ through link $e$. Therefore, in order to run the code for $\Nc_1$ on network $\Nc_2$, we need to specify the assumption of node $v_2$ about the message that it would have received from node $v_1$ in $\Nc_1$. Let node $v_2$ assume that it always receives the bit pattern corresponding to the bin containing $\mathcal{M}_0$. Making this assumption and having the rest of $\Nc_2$ to perform as in $\Nc_1$, it is clear that all message vectors in $\mathcal{M}_0$ can be delivered with zero probability of error on $\mathcal{N}_2$ as well. In other words, if the input to the super source node in $\Nc_2$ is one of the message vectors in $\mathcal{M}_0$, then each sink $t\in\Tc$ recovers its intended message $\mathbf{M}_{\b(t)}$ with probability one. In the rest of the proof we show how this set $\mathcal{M}_0$ can be used to deliver independent information to different receivers over network $\Nc_2$.

Define random vector $\mathbf{U}=(U_1,U_2,\ldots,U_{|\Sc|})$ to have a uniform distribution over the elements of $\mathcal{M}_0$. For each input vector $\mathbf{U}$, each sink node $t$, recovers $\mathbf{U}_{\b(t)}$ perfectly.

The described model with input $\mathbf{U}$, and ouputs $\mathbf{U}_{\b(t)}$, for $t\in\Tc$ is  a deterministic broadcast channel (DBC)\footnote{DBCs and their capacity regions are briefly described in Appendix A.} whose capacity region is known. Therefore, we can employ this DBC to deliver information on $\Nc_2$.

Before doing this, we slightly change the set of sink nodes, and replace the set of sinks $\Tc$ by $\Tc_e$ as described in the following. This modification does not affect the functionality of the network, but simplifies the statement of the proof. Divide  each sink node $t$ into $|\b(t)|$ sink nodes, such that each one has access to all the information available to the node $t$, but is only interested in reconstructing one of the sources. Let $\Tc_e$ denote this expanded set of sinks. Consider a subset $\Tc_s$of size $|\Sc|$ of $\Tc_e$ such that each source $s\in\Sc$ is recovered by one of the sinks in $\Tc_S$. Since each sink in $\Tc_s$ only recovers one source, hence there should be a one-to-one correspondence between the elements of $\Sc$ and $\Tc_s$.

Now consider the DBC  with input $\mathbf{U}$ and outputs $\{U_t\}_{t\in\Tc_s}$. Since the code is zero-error, and there is a one-to-one correspondence between the elements of $\Sc$ and $\Tc_s$,  $\{U_s\}_{s\in\Sc}$ can be replaced by $\{U_s\}_{s\in\Sc}$. The capacity region of this DBC, as explained in Appendix A, can be described by the set of rates $(r_1,r_2,\ldots,r_{|\Sc|})$ satisfying
\begin{align}
\sum_{s\in{\Ac}} r_s\leq H(U_{\Ac}),\label{eq:inequal}
\end{align}
for all $\Ac\subset\Sc$. From our construction,
\begin{align}
H(U_{\Sc})&=H(U_1,U_2,\ldots,U_{|\Sc|}) \nonumber\\
&n(\sum_{s\in\Sc}R_s-\e).
\end{align}
On the other hand, for each $\Ac\subset\Sc$
\begin{align}
H(U_{\Sc})&=H(U_{\Ac})+H(U_{\Sc\backslash\Ac}).
\end{align}
Therefore,
\begin{align}
H(U_{\Ac})&=H(U_{\Sc})-H(U_{\Sc\backslash\Ac})\nonumber\\
&\geq n(\sum_{s\in\Sc}R_s-\e) -\sum_{s\in\Sc\backslash\Ac}H(U_s)\nonumber\\
&\geq n(\sum_{s\in\Sc}R_s-\e) -\sum_{s\in\Sc\backslash\Ac}nR_s\nonumber\\
&= n(\sum_{s\in\Ac}R_s-\e).
\end{align}

It is easy to check that the point $(r_1,r_2,\ldots,r_{|\Sc|})=n(\mathbf{R}-\e\mathbf{1})$ satisfies all inequalities required by \eqref{eq:inequal}. Hence, using the network $\Nc_2$ as a DBC, the rate vector $\mathbf{R}-\e$ can be transmitted to nodes in $\Tc_s$.

We now argue that the described DBC code, with no extra effort, delivers the rate vector $\mathbf{R}-\e$ on network $\Nc_2$ to all sinks, not just those in $\Tc_s$. The reason is that for each message vector $\mathbf{M}\in\mathcal{M}_0$, and consequently for each input $\mathbf{U}$, all sink nodes in $\Tc_e$ that are interested  in recovering a source $s$, \ie for all $t\in\Tc_e$ such that $\b(t)= s$, receive $M_s$ with probability one. Hence, if by the described coding scheme  rate $R_s$ is delivered to one of them, then the rest are able to receive the same data stream at rate $R_s$ as well.
\end{proof}

For some simple cases of multi-source multi-demand case networks, for instance when the link of capacity $\e$ is directly connected to one of the sources, the same result still holds, i.e.~removing that link cannot change the capacity region by more than $\e$. However, for the general case, it remains open as to how much removing a link of capacity $\e$ can affect the capacity region of the network.

\section{Cycles and delays}\label{s:cycle}

In \cite{cover}, the capacity of a channel is defined in terms of bits per channel use. In the case where there are no cycles, this definition is equivalent to defining capacity in terms of bits transmitted per unit time. However, we do not know whether in a general network with cycles and delays, these two definitions, \ie bits per channel use versus bits per unit time, could result in different capacity regions. For example, consider two networks $\Nc_1$ and $\Nc_2$ that are identical except  for a single component cycle, illustrated in Fig.~\ref{fig:loops}.  In $\Nc_2$ the transmission delay from $v_1$ to $v_2$ is larger than in $\Nc_1$ due to the added node $v_0$, assuming that each link introduces unit delay. Now, we do not know whether achieving capacity on $\Nc_1$ could require  interactive communication over a feedback loop, where each successive symbol transmitted by $v_1$ depends on previous symbols received from $v_2$ and vice versa. If we try to run the same code  on $\Nc_2$,  the number of symbols per unit time that can be exchanged by nodes $v_1$ and $v_2$  is decreased because of  the additional delay introduced by node $v_0$. Hence, if the rate is defined in terms of bits communicated per unit time, this additional delay will reduce the capacity region. However, if the capacity is defined in terms of bits communicated per channel use, by letting each node wait until it has received the information it requires to generate its next symbol, the delivered  rates and consequently the capacity region will not be affected. \footnote{Note that in the case of multicast, since it is  known that capacity-achieving codes do not require such communication over feedback loops, the capacity region remains unchanged under both definitions, even in the case of cyclic graphs.}

In most of the operations introduced in this paper,  the number of hops between nodes is different in the original network compared to its simplified version.  Therefore, such operations in a cyclic graph with delay can potentially change the capacity region in ways that are not predicted by our analysis. Hence, in this paper we have restricted ourselves to the case of acyclic graphs. However, studying the effect of these operations in a general network with cycles remains as a question for further study.

\begin{figure}
\centering
  \subfigure[Component of $\Nc_1$]{\psfrag{v1}[l]{$v_1$}\psfrag{v2}[l]{$v_2$}
\includegraphics[width=3.5cm]{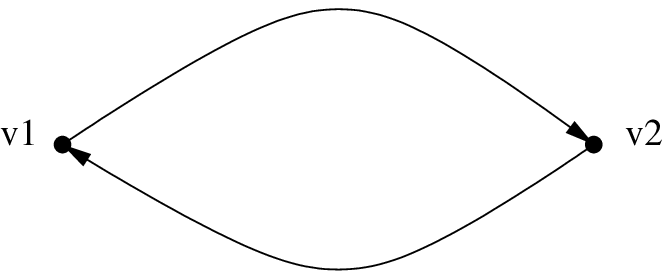}}\hspace{0.2cm}
  \subfigure[Component of $\Nc_2$]{\psfrag{v1}[l]{$v_1$}\psfrag{v2}[l]{$v_2$}\psfrag{v3}[l]{$v_0$}\includegraphics[width=3.5cm]{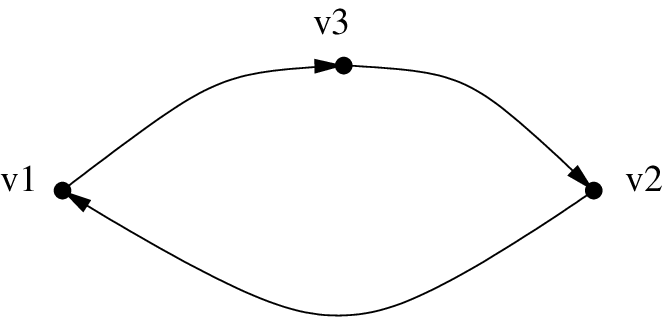}}
   \caption{Changing the delay between two nodes by introducing an extra node}\label{fig:loops}
   \end{figure}

\section{Conclusions}\label{s:conclude}

In this paper,  we proposed new techniques for efficiently approximating or bounding the capacity of networks with complex topologies
and arbitrary communication demands, such as non-multicast and functional demands.  We proposed to take a new approach based on
systematic graph operations that can be applied recursively to simplify large networks, so as to facilitate the application of computational
bounding tools. Besides its generality, another motivation for such a recursive graph-based approach is  that it lends itself to tractable
``divide-and-conquer'' algorithms for analyzing very large networks and allows  computation to be traded off against tightness of obtained
bounds according to the extent of simplification.

Techniques proposed for networks of directed noiseless point-to-point links (bit pipes) can be readily applied to networks of noisy multi-terminal channels by using the results of Koetter et al.~\cite{KoetterE:10} to replace each noisy link with a bounding model consisting of such bit pipes.  However, it may be possible to obtain better bounds by developing simplification tools that are directly applicable to noisy networks.  Thus, we will seek to extend the above work to networks containing noisy broadcast, multiple access and interference links.

\renewcommand{\theequation}{A-\arabic{equation}}
\setcounter{equation}{0}  

\section*{APPENDIX A\\ Deterministic broadcast channel}  \label{app1} 

Deterministic broadcast channels (DBC) are spacial cases of general broadcast channels. In a $K$-user DBC,
\[
\P((Y_1,\ldots,Y_K)=(y_1,\ldots,y_K)|X=x)\in\{0,1\}.
\]
In other words, since in a BC the capacity region only depends on the marginal distributions \cite{cover}, a $K$-user DBC can be described by $k$ functions $(f_1,\ldots,f_k)$ where
\begin{align}
f_i:\Xc\to\Yc_i,\label{eq:map}
\end{align}
and $Y_i=f_i(X)$ for $i\in\{1,2,\ldots,k\}$. In \eqref{eq:map},  $\Xc$ and $\Yc_i$ refer to the channel input alphabet and the output alphabet of the $i^{\rm th}$ channel respectively.

The capacity region of a $k$-user DBC can be described by the union of the set of rates $(R_1,R_2,\ldots,R_k)$ satisfying
\begin{align}
\sum\limits_{i\in\mathcal{A} }R_i \leq H(Y_{\mathcal{A}}), \forall \mathcal{A}\subset\{1,\ldots,k\},
\end{align}
for some $P(X)$  \cite{Marton:77,Pinsker:78} .

\begin{figure}
\begin{center}
\psfrag{X}[l]{\scriptsize{$X$}}
\psfrag{Y1}[l]{\scriptsize{$Y_1=f_1(X)$}}
\psfrag{Y2}[l]{\scriptsize{$Y_2=f_2(X)$}}
\psfrag{Y3}[l]{\scriptsize{$Y_{K-1}=f_{K-1}(X)$}}
\psfrag{Y4}[l]{\scriptsize{$Y_{K}=f_K(X)$}}
\psfrag{BC}[l]{DBC}
\includegraphics[width=4.5cm]{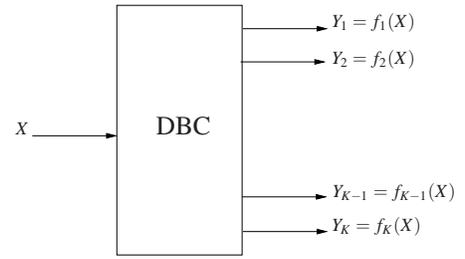}\caption{Deterministic broadcast channel}\label{fig:BC}
\end{center}
\end{figure}

\section*{Acknowledgments}
This work has been supported in part by the Air Force Office of Scientific Research under grant FA9550-10-1-0166 and Caltech's Lee Center for Advanced Networking.

\bibliographystyle{unsrt}
\bibliography{myrefs}

\end{document}